\newtheorem*{theorem*}{Theorem}
\theoremstyle{definition}
\newcommand\s[1]{_{\rm #1}}
\newcommand{\ketbra}[1]{ | #1 \rangle\!\langle #1 |}
\newcommand{\rank} {\operatorname{rank}}
\newcommand{\inprod}[2] {\langle #1 , #2 \rangle}
\newcommand{\one}{\leavevmode\hbox{\small1\normalsize\kern-.33em1}}
\newcommand{\R}{\mathbb{R}}
\newcommand{\M}{{\cal M}}
\newcommand{\PP}{{\cal P}}
\newcommand{\EE}{{\cal E}}
\newcommand{\St}{{\cal S}}
\newcommand{\V}{{\cal V}}
\newcommand{\MM}{\mathsf{M}}
\newcommand{\pp}{\mathsf{P}}
\newcommand{\e}{\mathsf{E}}
\def\R{{\mathbb{R}}}
\def\spn{\mathord{\rm span}}
\begin{document}

\title{Contextuality of all optimal quantum cloning}
\author{Mina Doosti}
\affiliation{School of Informatics, Quantum Software Lab, University of Edinburgh, United Kingdom}
\author{Theodoros Yianni}
\affiliation{Department of Computer Science, Royal Holloway, University of London, United Kingdom}
\author{Farid Shahandeh}
\affiliation{Department of Computer Science, Royal Holloway, University of London, United Kingdom}


\begin{abstract}
Quantum contextuality is a key nonclassical feature underlying advantages in quantum computation and communication. We introduce a new method to study contextuality in quantum information–processing tasks and protocols, relying solely on observed information-processing statistics. Building on the framework of~\cite{shahandeh2025unified} and employing rank separation techniques, we prove that contextuality is the necessary resource in both phase-covariant and universal optimal quantum cloning, thereby establishing its role as a fundamental source of nonclassicality in all known optimal cloning scenarios and resolving an open problem on the connection between cloning and contextuality. As a second application, we demonstrate the power of our method by providing a new, streamlined proof of contextuality in minimum-error quantum state discrimination.
\end{abstract}

\maketitle

Understanding what truly makes quantum theory nonclassical lies at the heart of quantum information science. From foundational principles to practical protocols and algorithms in computation, communication, and cryptography, the field has been driven by the quest to identify the physical resources that distinguish quantum from classical information processing. A paradigmatic manifestation of such nonclassicality is the no-cloning theorem, which states that, unlike classical bits, unknown quantum states cannot be copied exactly and deterministically~\cite{Wootters1982}.
This fundamental difference is often viewed as a unique nonclassical feature of quantum mechanics.
Its impact also extends to various security proofs and reductions in the quantum realm~\cite{scarani2005quantum,broadbent2016quantum,aaronson2009quantum}. 

Another major notion of nonclassicality is quantum contextuality.
In its various formulations~\cite{Kochen1967,spekkens2005,Cabello2008,Abramsky2011}, contextuality generally refers to the impossibility of reproducing all quantum statistics with a classical hidden-variable model that assigns outcomes independently of the measurement context. Contextuality is therefore regarded as an underlying resource for many quantum advantages in computation and communication~\cite{Raussendorf2013,Bermejo-Vega2017,Frembs2018,Shahandeh2021arXiv,Schmid2022,Montina2014,Ambainis2016,Gupta2023}. These considerations naturally lead to the question of how these two fundamental notions are linked. 

The no-cloning theorem has also been shown to hold beyond quantum theory in general probabilistic theories~\cite{barnum2006cloning}. It is crucial, however, to recognize that the exact no-cloning theorem only applies to overlapping quantum states with their classical counterparts being overlapping probability distributions, which is also impossible to mimic perfectly~\cite{hardy1999,Spekkens_toy}.
Furthermore, quantum cloning machines can produce approximate copies with bounded fidelities~\cite{Buzek96,Bruss1998}.
The latter is a more restricted notion of no-cloning, prompting the question of whether one can link it to a fundamental nonclassical feature like contextuality. 

It was shown in Ref.~\cite{lostaglio2020contextual} that the optimal state-dependent cloning fidelity predicted by quantum theory does not admit a noncontextual ontological explanation.
However, the problem remained open for more general schemes, namely, phase-covariant and universal cloning.
The main obstacle here is that the input states do not parameterize the optimal cloners~\cite{Bruss1998}.

One of the main limitations of previous approaches is that, with the existing methods and proof techniques, it has been difficult to pinpoint the role of contextuality in quantum information–processing tasks directly from their observable data, which in turn obstructed the establishment of general results. However, recently in Ref.~\cite{shahandeh2025unified}, the authors propose a novel approach to generalized contextuality~\cite{spekkens2005} and introduce the \textit{rank-separation} criterion.
This criterion is solely based on statistics and does not require analyzing operational equivalences and identities, hence departing from the traditional methods. This new characterization of contextuality can serve as a powerful tool for the study of quantum resources required in operational tasks and quantum protocols, which is otherwise challenging to study in the other framework, apart from toy models and contrived examples.

In this Letter, we initiate the study of the rank-separation criterion for characterizing contextuality in fundamental yet practical quantum tasks, and prove a unifying theorem establishing the contextuality of all quantum cloning machines, including the previously open phase-covariant and universal cases~\cite{Bruss1998}. Our result shows that optimal cloning is necessarily contextual, thereby identifying contextuality as a fundamental resource underlying all optimal quantum cloning schemes and resolving this open problem. As a second application, highlighting the generality of our approach, we provide a new proof of contextuality for minimum-error quantum state discrimination. Although contextuality in these tasks has been studied by other techniques~\cite{Schmid2018,Bruss1998,lostaglio2020contextual}, our argument is substantially simpler and offers clearer conceptual insight.

The cloning task is defined as processing a state $\rho$, chosen from a set $\St$ of inputs, and a fixed ancilla (blank) state by a physical map producing a bipartite output intended to approximate two copies of $\rho$. The choice of $\St$ depends on the cloning task: in universal cloning, it contains all pure states on the Bloch sphere (or the full Hilbert space in higher dimensions), in phase-covariant cloning, it is restricted to equatorial states, while in state-dependent cloning it consists of two fixed nonorthogonal states. The performance of cloning is quantified by the \emph{cloning fidelity}, defined as the overlap between the actual bipartite output state $\sigma$ (or its reduced subsystems) and the ideal cloned state $\rho\otimes\rho$. A map achieving the maximum fidelity allowed by quantum theory is called an \emph{optimal} quantum cloner.

While the literature typically focuses on characterizing the cloning map, performance depends only on the output states relative to the ideal clones. Cloning can therefore be viewed as a prepare-measure scenario, with measurements chosen so that the resulting statistics fully determine the cloning fidelity. Concretely, this is done by measuring in the bases of the ideal and actual clones, so that the success and error probabilities directly fix the fidelity.

In an operational description of a prepare-measure scenario, there are $I$ preparations and $J$ measurements, each of which is a laboratory instruction for preparing and measuring the system, respectively.
We denote the collections of these instructions by $\PP:=\{\pp_i\}$ and $\M:=\{\MM_j\}$.
We also assume that each measurement results in $n$ outcomes, each denoted by $\e_k$.
The observable probabilities in a prepare-measure scenario can then be recast in a $I\times Jn$ matrix $C\s{F}$,
\begin{gather}\label{eq:COPE}
C\s{F} := 
\begin{pmatrix}
\begin{array}{c}
   \begin{matrix}
p(1|\pp_1,\MM_1) & \cdots & p(1|\pp_I,\MM_1) \\
        \vdots &   & \vdots  \\
        p(n|\pp_1,\MM_1) & \cdots & p(n|\pp_I,\MM_1)
   \end{matrix}\\ 
   \hline
    \vdots \\
   \hline
   \begin{matrix}
        p(1|\pp_1,\MM_J) & \cdots & p(1|\pp_I,\MM_J)\\
        \vdots & & \vdots \\
        p(n|\pp_1,\MM_J) & \cdots & p(n|\pp_I,\MM_J)
   \end{matrix} \\
\end{array}    
\end{pmatrix}.
\end{gather}
Tsirelson referred to $C\s{F}$ as the \textit{behavior} of the system~\cite{Tsirelson1993}. 
In our framework, however, this matrix is viewed as the restriction of the complete probabilistic structure of an operational theory, i.e., its matrix of \textit{conditional outcome probabilities of events} (COPE), to subsets of preparations and measurements, and is referred to as the \textit{fragment} COPE matrix~\cite{shahandeh2025unified}.

Contextuality is usually analyzed at the level of the operational theory rather than individual tasks.
The challenge, then, is to establish when the theory’s contextual features actually serve as a resource for a given task.
To overcome this hurdle, let us recap the noncontextuality of operational theories as given in Ref.~\cite{shahandeh2025unified}.

First, all COPE matrices can be modeled using \textit{ontological models}, in which an \textit{ontic} variable space $\Lambda$ is assumed to underlie physical phenomena.
It is assumed that there is a countable and finite number of ontic points, hence $\Lambda$ is embedded in a vector space $\ V\s{ont}\!\cong\!\R^s$ with $s\!:=\!|\Lambda|$.
The preparations and measurement outcomes are assigned to probability distributions and response functions over $\Lambda$, respectively.
The probability distributions are called the \textit{epistemic states} (ES) and denoted by $\mu_i=(\mu_{\lambda i})_\lambda$ such that $0\!\leq\!\mu_{\lambda i}\!\leq 1$ and $\sum_\lambda\! \mu_{\lambda i}\!=\!1$ where the sum is over all ontic points.
The response functions (RF) for a measurement $\MM_j$ are denoted by $\xi^j_k\!=\!(\xi^j_{k\lambda})_\lambda$ and satisfy $\xi^j_{k\lambda}\!\in\![0,1]$ and $\sum_k\!\xi^j_{k\lambda}\!=\!1$ at any point $\lambda$.
Then, the probability of a particular outcome $k$ given the preparation $i$ is obtained as $p(k|i,j)\!=\!\sum_\lambda \xi^j_{k\lambda}\mu_{\lambda i}\! =\! \inprod{\xi^j_k}{\mu_i}$, where the inner product is that of $\R^s$.
From this, it is not difficult to see that the problem of finding ontological models for COPE matrices of operational theories is equivalent to the \textit{nonnegative matrix factorization} (NMF) problem~\cite{shahandeh2025unified}. In other words, every ontological model corresponds to a decomposition of the COPE matrix, $C\!=\!RE$, where $R$ is an $N\!\times\!s$ nonnegative RF matrix and $E$ is an $s\!\times\!M$ nonnegative ES matrix.

Second, an ontological model is called noncontextual
if operationally equivalent procedures are represented identically in the model.
That is,
\begin{equation}\label{eq:ctxt_def}
    \pp_1\cong\pp_2 \Leftrightarrow \mu_{\pp_1} = \mu_{\pp_2},\quad
    \e_1 \cong \e_2 \Leftrightarrow \xi_{\e_1} = \xi_{\e_2},    
\end{equation}
where operational equivalence $\cong$ means that no viable measurement (preparation) can separate the preparations $\pp_1$ and $\pp_2$ (outcome events $\e_1$ and $\e_2$)~\cite{spekkens2005}.
This means that two operationally equivalent preparations correspond to an identical column of the COPE matrix.
Similarly, two operationally equivalent measurement outcomes, $\e_1$ and $\e_2$, correspond to identical rows in the COPE matrix.
Note that, in representing COPE matrices, we adopt an extremal-quotiented convention: only extremal rows and columns are shown, with repetitions of rows within a measurement and repetitions of columns omitted.~\cite{shahandeh2025unified}.

Thanks to the COPE formalism, the uniqueness of representations of Eq.~\eqref{eq:ctxt_def}, has a precise linear-algebraic implication: An ontological model is preparation (measurement) noncontextual if and only if its set of response functions (epistemic states) separates the points in the linear span of its epistemic states (response functions). 
It follows that~\cite{shahandeh2025unified} an operational theory with a COPE matrix $C$ admits a noncontextual ontological model if and only if it admits an NMF $C\!=\!RE$ such that
\begin{equation}\label{eq:nctxt_rank}
   \rank{C}=\rank{R}=\rank{E}.
\end{equation}
An NMF satisfying Eq.~\eqref{eq:nctxt_rank} is called an \textit{equirank} NMF (ENMF)~\cite{shahandeh2025unified,yianni2025}.
Although all COPE matrices admit an NMF, and thus an ontological model, an ENMF and a \textit{noncontextual} ontological model may cease to exist~\cite{shahandeh2025unified}.
The failure of the equirank condition is termed \textit{rank separation}.

The definition of contextuality at the level of the operational theory does not, by itself, determine how it manifests in concrete information‑processing tasks. 
A theory may exhibit contextuality without it being operationally activated by a particular task.
To determine when it truly serves as a resource, we must therefore translate the theory-level criteria, such as rank separation, into criteria on the task's statistics, i.e., the fragment COPE matrix.

A key role in the derivation of Eq.~\eqref{eq:nctxt_rank} is played by the \textit{completeness} assumption, which states that the COPE matrix encodes the entire probabilistic structure of the operational theory.
This implies that there are no preparations or measurement outcomes that increase the number of separated measurement outcomes or preparations, respectively.
This property is called the \textit{relative tomographic completeness}~\cite{schmid2024shadows,shahandeh2025unified}.
This, in turn, guarantees that adding any number of admissible preparations and measurements within the operational theory cannot transform a COPE matrix that does not admit an ENMF into one that does.

A fragment COPE matrix need not satisfy the completeness assumption. However, some fragments do satisfy relative tomographic completeness. In such cases, the fragment COPE matrix plays the role of a restricted operational theory, so rank separation in the fragment implies rank separation, and hence contextuality, of the parent theory.

As an example relevant here, consider the qubit operational theory and the fragment consisting of states and effects on the equator of the Bloch sphere. This fragment satisfies relative tomographic completeness. Therefore, if its fragment COPE matrix admits no ENMF, then neither does the full qubit theory. This leads to our first main result.

\paragraph*{Theorem~1.}
Suppose $C\s{F}$ is a fragment of a COPE matrix $C$ and satisfies relative tomographic completeness.
Then, if $C\s{F}$ does not admit an ENMF, neither does $C$.
\paragraph*{Proof.}
Assume the contrary, i.e., that $C=RE$ with nonnegative matrices $R$ and $E$ such that $\rank{C}\!=\!\rank{R}\!=\!\rank{E}$.
Since $C\s{F}$ is just a restriction of $C$, there are restrictions $R\s{F}$ and $E\s{F}$ of $R$ and $E$, respectively, such that $C\s{F}\!=\!R\s{F}E\s{F}$.
Now if $\rank{R\s{F}}\!\neq\!\rank{E\s{F}}$, 
there will exist convex combinations of rows of $R\s{F}$ or convex combinations of columns of $E\s{F}$ 
that cannot be separated by fragments' preparations or measurements; see the proof of Lemma~2 of Ref.~\cite{shahandeh2025unified}.
This can only be removed via additional preparations (columns of $E$ not within $E\s{F}$) or additional measurement outcomes (rows of $R$ not within $R\s{F}$) of the parent theory, which contradict the relative tomographic completeness of $C\s{F}$.
This contradiction forces $R\s{F}$ and $E\s{F}$ to have equal ranks, and relative tomographic completeness then fixes the rank of $C\s{F}$ such that $\rank{C\s{F}}\!=\!\rank{R\s{F}}\!=\!\rank{E\s{F}}$.
Hence, $C\s{F}\!=\!R\s{F}E\s{F}$ is an ENMF of $C\s{F}$.
The latter contradicts the assumption that $C\s{F}$ does not admit an ENMF. \qed 

When considering specific information-processing tasks such as cloning, the obtainable statistics usually do not even imply relative tomographic completeness.
Therefore, one cannot directly infer the fragment’s contextuality from the rank separation of the task’s probability matrix.
Our second main result bridges this gap.

\paragraph*{Theorem~2.}
Suppose $C\s{F}$ is a fragment COPE matrix of another fragment COPE matrix $C'\s{F}$.
Suppose $\rank{C\s{F}}\!=\!\rank{C'\s{F}}\!=\!r$ and that $C\s{F}$ does not admit an ENMF.
Then, $C'\s{F}$ does not admit an ENMF.
\paragraph*{Proof.}
Let $C'\s{F}\!=\!R'\s{F}E'\s{F}$ be an ENMF of $C'\s{F}$ with inner dimension $\mathbb{k}$.
Then, if we choose a submatrix $C\s{F}$ of $C'\s{F}$ such that $\rank{C\s{F}}\!=\!\rank{C'\s{F}}\!=\!r$, using monotonicity of rank, it must be that $R'\s{F}$ and $E'\s{F}$ have submatrices $R\s{F}$ and $E\s{F}$, respectively, such that $\rank{R'\s{F}}\!=\!\rank{R\s{F}}\!=\!r$, $\rank{E'\s{F}}\!=\!\rank{E\s{F}}\!=\!r$, and $C\s{F}\!=\!R\s{F}E\s{F}$.
The latter is an ENMF of $C\s{F}$ in $\mathbb{k}$ dimensions, which is a contradiction. \qed

By choosing $C'\s{F}$ to be the fragment COPE matrix of the relatively tomographically complete fragment and concatenating Theorems~2 and~1, we can deduce the rank separation of the operational theory from the rank separation of the task's fragment COPE matrix.
Theorems~1 and~2 are thus central to our contextuality proof technique via rank separation:
Given a task's fragment COPE matrix $C\s{F}$, we establish a lower bound on the rank of $R\s{F}$ or $E\s{F}$ and compare it with the rank of the fragment COPE matrix of the relatively tomographically complete fragment, $C'\s{F}$. A gap between these bounds proves, via Theorems~1 and~2, that Eq.~\eqref{eq:nctxt_rank} cannot be satisfied.

Such lower bounds were derived in Ref.~\cite{shahandeh2025unified} for nonnegative matrices with a particular sparsity pattern, as follows.

\paragraph*{Theorem~3.~\cite{shahandeh2025unified}}
Suppose $C\s{F}$ is a (fragment) COPE matrix for $n$ preparations and $n$ measurement outcomes such that each row (column) of $C\s{F}$ has at least one zero entry that differs from all other rows (columns), denoting the impossibility of a specific measurement outcome given a particular preparation.
Then, the RF or ES matrix spans an $l$-dimensional subspace of $\V\s{ont}$ for the largest integer $l$ such that $\binom{l}{\lfloor l/2\rfloor}\!\leq\! n$.

We now present the rank‑separation proof technique used to establish contextuality in quantum information–processing tasks. Our first case-study is the minimum-error quantum state discrimination (MEQSD)~\cite{Schmid2018}. We use this as a warm-up example for our main result for cloning.
Here, the goal is to discriminate between two pure quantum states $\ketbra{\psi}$ and $\ketbra{\phi}$ in a single projective measurement $\{\ketbra{g_\phi},\ketbra{g_\psi}\}$.
Assuming that the probability of guessing each state correctly in this measurement is symmetric, the quantity $s\s{q}:=|\braket{\phi}{g_\phi}|^2\!=\!|\braket{\psi}{g_\psi}|^2$ represents the success probability.
The confusability of the two states, on the other hand, is given by $c\s{q}\!:=\!|\braket{\phi}{\psi}|^2$.
Schmid and Spekkens~\cite{Schmid2018} show that the quantum dependence of the success probability on the confusability cannot be reproduced by any noncontextual ontological model.
Here, we recover this result with a much shorter proof.

Consider the nontrivial cases in which $c\s{q}\!:=\!|\braket{\phi}{\psi}|^2 \!\neq\! 1$.
The states for the optimal MEQSD strategy are $\St\s{q}:=\{\ketbra{\phi},\ketbra{\phi^\perp},\ketbra{\psi},\ketbra{\psi^\perp},\ketbra{g_\phi},\ketbra{g_\psi}\}$.
They live in a 3-dimensional space as they belong to the equatorial plane of the Bloch sphere so that $\dim \spn \St\s{q} \!=\! 3$.
The corresponding effects are $\EE\s{q}\!=\!\St\s{q}$ which similarly span a 3-dimensional space.
These imply that the rank of the $6\!\times\!6$ fragment COPE matrix obtained in MEQSD,
\begin{gather}\label{app:eq:MOP_MEQSD}
\small{C\s{MEQSD}} = 
\begin{pmatrix}
        1 & 0 & c\s{q} & 1-c\s{q} & s\s{q} & 1-s\s{q} \\
        0 & 1 & 1-c\s{q} & c\s{q} & 1-s\s{q} & s\s{q} \\
        c\s{q} & 1-c\s{q} & 1 & 0 & 1-s\s{q} & s\s{q} \\
        1-c\s{q} & c\s{q} & 0 & 1 & s\s{q} & 1-s\s{q} \\
        s\s{q} & 1-s\s{q} & 1-s\s{q} & s\s{q} & 1 & 0 \\
        1-s\s{q} & s\s{q} & s\s{q} & 1-s\s{q} & 0 & 1
\end{pmatrix},
\end{gather}
is three.
However, it follows from Theorem~3 that the rank of one of RF or ES matrices is at least four, establishing the rank separation in the task.
By Theorem~2, the fragment defined by the equatorial plane of the Bloch sphere does not admit an ENMF. 
Consequently, Theorem~1 implies that the full operational qubit theory also fails to admit an ENMF, and is therefore contextual.
The contrapositive of this reasoning makes explicit how the operational theory's contextuality manifests as a resource for the MEQSD task.
Furthermore, if $c\s{q}=0$, i.e., the input states are orthogonal and perfectly distinguishable, then both the rank of $C\s{MEQSD}$ and the ontic dimensionality reduce to two, and the noncontextual model performs as well as quantum mechanics.

We can also prove that the optimal success probability in state discrimination as a function of confusability obtained by imposing contextuality on its COPE is equal to the optimal quantum success probability. This further demonstrates the effectiveness of rank separation. Interestingly, the theory-agnostic nature of our approach then suggests that the quantum success probability is optimal among \textit{all} GPTs. To this end, given that the size of the ontological models for $C\s{MEQSD}$ in Eq.~\eqref{app:eq:MOP_MEQSD} is bounded from below by four, we impose the condition $\rank{C\s{MEQSD}}\leq 3$.
Writing out the determinant of the matrix and solving it for $s\s{q}$ we find for the nontrivial case that $s\s{q}(c\s{q})\!=\!(1\pm\sqrt{1 \!-\! c\s{q}})/2$.
The plus solution optimizes the success probability and is exactly the quantum bound given by the Helstrom measurement~\cite{Helstrom1969}.
Notably, the contextual solutions also include the nonoptimal (and complementary) function $s\s{q2}\!=\!1-s\s{q1}$.
Interestingly, because we did not make any \textit{a priori} assumption about the desired contextual GPT, the optimal bounds above show that quantum theory provides the optimal state-discrimination performance among all of them.

We now show our central result on the contextuality of Quantum (deterministic) Approximate Cloning Machines (QACM), in which the goal is to find the optimal unitary cloner for a given family of quantum states~\cite{Bruss1998}.
It has been shown that the optimal quantum cloner of two states, known as the \textit{state-dependent} cloner, is contextual~\cite{lostaglio2020contextual}.
This leaves us with the intriguing open question of whether the optimal QACM is fundamentally a contextual operation, which underscores the potential significance of quantum contextuality as a resource for quantum cryptography~\cite{Bennet1992,Ambainis2016}, among many other applications.
We answer this open question affirmatively.

\paragraph*{Theorem~4.}
A noncontextual ontological model cannot achieve optimal global fidelity in phase-covariant and universal quantum cloning.
\paragraph*{Proof.}
The COPEs in both cases have ranks no more than 16.
Therefore, we choose $\lceil\frac{1}{4}\binom{16+2}{9}\rceil\!=\! 12155$ pure states (on the Bloch sphere's equator or anywhere on it for phase-covariant and universal cloning, respectively).
We chose the preparations to be the collection of ideal clones of these states, their optimal clones, and their orthogonal vectors.
The COPE for this choice of vectors is a $48620 \!\times\! 48620$ matrix with a rank of at most 16 and, using Theorem~1 an ontological dimension of at least $18$.
By Theorem~3, the RF or ES spans the 18-dimensional ontic vector space.
Hence, there is a rank separation between the COPE and either $E$ or $R$, implying via Theorems~2 and~1 that any ontological model for the optimal phase-covariant and universal quantum cloning is contextual.  
\qed

Using the rank-separation technique, we also recover a similar result for state-dependent cloning previously shown in Ref.~\cite{lostaglio2020contextual}; see END MATTER~\ref{app:qsdc}.
Here, we leverage this approach to explain the gap between the optimal global fidelities in noncontextual versus contextual cloners.
This gap highlights the role of contextuality as a resource in achieving optimal cloning fidelities. 

First, note that the average global fidelity of each cloner is the average of global fidelities $\{s_i \!:=\! |\braket{aa_i}{\alpha_a}|^2\}$ which represent the probabilities that each cloned state $\ketbra{\alpha_i}$ passes the test of the ideal clone $\ketbra{aa_i}$.
These overlaps, together with the pairwise global distinguishing error probability, or \textit{confusability} in the contexts of MEQSD and state-dependent cloning~\cite{Schmid2018,lostaglio2020contextual}, $c_{ij}\!=\!|\braket{a_i}{a_j}|^4$, characterizes the different input states. 

Now, assume that $C$ is the COPE for a noncontextual cloning protocol, i.e., the condition in Eq.~\eqref{eq:nctxt_rank} is met, and $C$, $R$, and $E$ are full rank.
In this case, each $s_i$ (therefore the global fidelity) is either a linear function of confusabilities $\{c_{ij}\}$ or a constant.
However, if the COPE was contextual, the rank separation would have dictated that, for the same NMF, $C$ has a reduced rank, meaning that its determinant is zero.
For matrices of size larger than two, this implies a strictly nonlinear dependency between fidelity and confusabilities, hence a gap between fidelities in noncontextual and contextual protocols.
This connection between contextuality and optimal fidelity is further exemplified by the optimal fidelity achieved in state-dependent QACM, where the clones necessarily lie within the span of the ideal clones~\cite{Bruss1998}.
We thus conclude that all cloning machines achieving fidelities beyond classical cloners, including suboptimal cloners, are contextual.

\paragraph*{Discussions.}
We demonstrated the application of the rank separation criterion~\cite{shahandeh2025unified} as a powerful method to verify generalized contextuality in information-processing protocols. We proved that any optimal quantum cloning machine is inherently contextual, resolving an open problem posed by Lostaglio and Senno~\cite{lostaglio2020contextual}, and answering a fundamental question about the relationship between cloning and nonclassicality. As another application, we showed that our method not only demonstrates contextuality in MEQSD~\cite{Schmid2018} but also establishes a converse relationship: imposing contextuality on the protocol yields an advantage.
The theory-agnostic nature of our approach also showed that the maximum advantage obtainable in state discrimination of two-level systems by contextual GPTs is achieved by quantum mechanics.
We showed a similar result for the state-dependent cloning protocol~\cite{lostaglio2020contextual}. Furthermore, our approach gives a deeper intuition into the role of contextuality in cloning tasks by revealing that the class of contextual cloning operations is larger than those resulting in maximal fidelity.

Looking ahead, a natural direction is to extend our methods to other information-processing scenarios in which the role of contextuality remains unclear. Cryptographic protocols are a particularly compelling candidate, as our methods may help identify the quantum resources underlying security guarantees, especially where quantum communication provides an advantage or enables functionalities with no classical counterpart. Beyond cryptography, our approach may also be applied to broader computational and learning tasks, including quantum learning, natural language processing, and communication complexity. At the same time, certain limitations should be noted. Current proofs of rank separation rely on zero-valued entries—outcomes that never occur—whereas in practice, noise and imperfections only make such outcomes highly improbable. This may be addressed numerically using approximate nonnegative matrix factorizations~\cite{Berry2007}, though analytical lower bounds in the approximate setting remain scarce~\cite{Dewez2022}. Finally, while our results remain valid for countably infinite collections of preparations and effects~\cite{shahandeh2025unified}, extending rank separation to continuous-variable systems remains an important open direction. We conjecture that the framework should still hold there, with a suitable reformulation in terms of integral operators and their decompositions. Such an extension would be highly relevant for continuous-variable quantum information processing, where quantum communication and metrology provide natural testbeds for studying the interplay between contextuality and nonclassical advantages.

\acknowledgements
FS and MD gratefully acknowledge the financial support from the Engineering and Physical Sciences Research Council (EPSRC) through the Hub in Quantum Computing and Simulation grant (EP/T001062/1). FS also acknowledges support and resources provided by the Royal Commission for the Exhibition of 1851 Research Fellowship. TY acknowledges the support through the Quantum Computing Studentship funded by Royal Holloway, University of London.

\bibliography{quantum}

@misc{shahandeh2025unified,
      title={A Unified Linear Algebraic Framework for Physical Models and Generalized Contextuality}, 
      author={Farid Shahandeh and Theodoros Yianni and Mina Doosti},
      year={2025},
      eprint={2512.10000},
      archivePrefix={arXiv},
      primaryClass={quant-ph},
      url={https://arxiv.org/abs/2512.10000}
}

@misc{hardy1999,
      title={Disentangling Nonlocality and Teleportation}, 
      author={Lucien Hardy},
      year={1999},
      eprint={quant-ph/9906123},
      archivePrefix={arXiv},
      primaryClass={quant-ph},
      url={https://arxiv.org/abs/quant-ph/9906123}, 
}

@misc{yianni2025,
      title={Complexity of Contextuality}, 
      author={Theodoros Yianni and Farid Shahandeh},
      year={2025},
      eprint={2506.09133},
      archivePrefix={arXiv},
      primaryClass={quant-ph},
      url={https://arxiv.org/abs/2506.09133}, 
}

@article{Tsirelson1993,
title = {Some results and problems on quantum Bell-type inequalities"},
author = {Tsirelson, B. S.},
year = {1993},
volume = {8},
pages = {329--345},
journal = {Hadronic Journal. Supplement}
}

@misc{schmid2024shadows,
      title={Shadows and subsystems of generalized probabilistic theories: when tomographic incompleteness is not a loophole for contextuality proofs}, 
      author={David Schmid and John H. Selby and Vinicius P. Rossi and Roberto D. Baldijão and Ana Belén Sainz},
      year={2024},
      eprint={2409.13024},
      archivePrefix={arXiv},
      primaryClass={quant-ph},
      url={https://arxiv.org/abs/2409.13024}, 
}

@article{Spekkens_toy,
  title = {Evidence for the epistemic view of quantum states: A toy theory},
  author = {Spekkens, Robert W.},
  journal = {Phys. Rev. A},
  volume = {75},
  issue = {3},
  pages = {032110},
  numpages = {30},
  year = {2007},
  month = {Mar},
  publisher = {American Physical Society},
  doi = {10.1103/PhysRevA.75.032110},
  url = {https://link.aps.org/doi/10.1103/PhysRevA.75.032110}
}

@article{Montina2014,
  title = {Necessary and sufficient optimality conditions for classical simulations of quantum communication processes},
  author = {Montina, Alberto and Wolf, Stefan},
  journal = {Phys. Rev. A},
  volume = {90},
  issue = {1},
  pages = {012309},
  numpages = {10},
  year = {2014},
  month = {Jul},
  publisher = {American Physical Society},
  doi = {10.1103/PhysRevA.90.012309},
  url = {https://link.aps.org/doi/10.1103/PhysRevA.90.012309}
}

@article{barnum2006cloning,
  title = {Generalized No-Broadcasting Theorem},
  author = {Barnum, Howard and Barrett, Jonathan and Leifer, Matthew and Wilce, Alexander},
  journal = {Phys. Rev. Lett.},
  volume = {99},
  issue = {24},
  pages = {240501},
  numpages = {4},
  year = {2007},
  month = {Dec},
  publisher = {American Physical Society},
  doi = {10.1103/PhysRevLett.99.240501},
  url = {https://link.aps.org/doi/10.1103/PhysRevLett.99.240501}
}

@article{Bruss1998,
  title = {Optimal universal and state-dependent quantum cloning},
  author = {Bru\ss{}, Dagmar and DiVincenzo, David P. and Ekert, Artur and Fuchs, Christopher A. and Macchiavello, Chiara and Smolin, John A.},
  journal = {Phys. Rev. A},
  volume = {57},
  issue = {4},
  pages = {2368--2378},
  numpages = {0},
  year = {1998},
  month = {Apr},
  publisher = {American Physical Society},
  doi = {10.1103/PhysRevA.57.2368},
  url = {https://link.aps.org/doi/10.1103/PhysRevA.57.2368}
}

@article{scarani2005quantum,
  title = {Quantum cloning},
  author = {Scarani, Valerio and Iblisdir, Sofyan and Gisin, Nicolas and Ac\'{\i}n, Antonio},
  journal = {Rev. Mod. Phys.},
  volume = {77},
  issue = {4},
  pages = {1225--1256},
  numpages = {0},
  year = {2005},
  month = {Nov},
  publisher = {American Physical Society},
  doi = {10.1103/RevModPhys.77.1225},
  url = {https://link.aps.org/doi/10.1103/RevModPhys.77.1225}
}

@Article{Wootters1982,
author={Wootters, W. K.
and Zurek, W. H.},
title={A single quantum cannot be cloned},
journal={Nature},
year={1982},
month={Oct},
day={01},
volume={299},
number={5886},
pages={802-803},
issn={1476-4687},
doi={10.1038/299802a0},
url={https://doi.org/10.1038/299802a0}
}

@ARTICLE{Kochen1967,
    author = "Simon Kochen, E. Specker",
     title = "The Problem of Hidden Variables in Quantum Mechanics",
   journal = "Indiana Univ. Math. J.",
  fjournal = "Indiana University Mathematics Journal",
    volume = 17,
      year = 1968,
     issue = 1,
     pages = "59--87",
      issn = "0022-2518",
     coden = "IUMJAB",
   mrclass = "",
}

@article{spekkens2005,
  title = {Contextuality for preparations, transformations, and unsharp measurements},
  author = {Spekkens, R. W.},
  journal = {Phys. Rev. A},
  volume = {71},
  issue = {5},
  pages = {052108},
  numpages = {17},
  year = {2005},
  month = {May},
  publisher = {American Physical Society},
  doi = {10.1103/PhysRevA.71.052108},
  url = {https://link.aps.org/doi/10.1103/PhysRevA.71.052108}
}

@article{Schmid2018,
author = {Schmid, David and Spekkens, Robert W},
doi = {10.1103/PhysRevX.8.011015},
file = {:C\:/Users/ULAC155/AppData/Local/Mendeley Ltd./Mendeley Desktop/Downloaded/Schmid, Spekkens - 2018 - Contextual Advantage for State Discrimination.pdf:pdf},
issn = {2160-3308},
journal = {Physical Review X},
keywords = {doi:10.1103/PhysRevX.8.011015 url:https://doi.org/,quantum information,quantum physics},
month = {feb},
number = {1},
pages = {011015},
publisher = {American Physical Society},
title = {{Contextual Advantage for State Discrimination}},
url = {https://doi.org/10.1103/PhysRevX.8.011015 https://link.aps.org/doi/10.1103/PhysRevX.8.011015},
volume = {8},
year = {2018}
}

@article{Buzek96,
  title = {Quantum copying: Beyond the no-cloning theorem},
  author = {Bu\ifmmode \check{z}\else \v{z}\fi{}ek, V. and Hillery, M.},
  journal = {Phys. Rev. A},
  volume = {54},
  issue = {3},
  pages = {1844--1852},
  numpages = {0},
  year = {1996},
  month = {Sep},
  publisher = {American Physical Society},
  doi = {10.1103/PhysRevA.54.1844},
  url = {https://link.aps.org/doi/10.1103/PhysRevA.54.1844}
}

@article{broadbent2016quantum,
  title={Quantum cryptography beyond quantum key distribution},
  author={Broadbent, Anne and Schaffner, Christian},
  journal={Designs, Codes and Cryptography},
  volume={78},
  pages={351--382},
  year={2016},
  publisher={Springer},
doi={10.1007/s10623-015-0157-4},
url={https://link.springer.com/article/10.1007/s10623-015-0157-4}
}

@inproceedings{aaronson2009quantum,
  title={Quantum copy-protection and quantum money},
  author={Aaronson, Scott},
  booktitle={2009 24th Annual IEEE Conference on Computational Complexity},
  pages={229--242},
  year={2009},
  organization={IEEE},
doi={10.1109/CCC.2009.42},
url={https://ieeexplore.ieee.org/document/5231194}
}

@article{lostaglio2020contextual,
  title = {Contextual advantage for state-dependent cloning},
  author = {Lostaglio, Matteo and Senno, Gabriel},
  journal = {{Quantum}},
  issn = {2521-327X},
  publisher = {{Verein zur F{\"{o}}rderung des Open Access Publizierens in den Quantenwissenschaften}},
  volume = {4},
  pages = {258},
  month = apr,
  year = {2020},
  doi = {10.22331/q-2020-04-27-258},
  url = {https://doi.org/10.22331/q-2020-04-27-258}
}

@article{Berry2007,
title = {Algorithms and applications for approximate nonnegative matrix factorization},
journal = {Computational Statistics \& Data Analysis},
volume = {52},
number = {1},
pages = {155-173},
year = {2007},
issn = {0167-9473},
doi = {https://doi.org/10.1016/j.csda.2006.11.006},
url = {https://www.sciencedirect.com/science/article/pii/S0167947306004191},
author = {Michael W. Berry and Murray Browne and Amy N. Langville and V. Paul Pauca and Robert J. Plemmons}
}

@phdthesis{Dewez2022,
  author       = {Julien Dewez}, 
  title        = {Computational Approaches for Lower Bounds on the Nonnegative Rank},
  school       = {UCLouvain},
  year         = {2022},
  url = {http://hdl.handle.net/2078.1/258242}
}

@article{Cabello2008,
  title = {Experimentally Testable State-Independent Quantum Contextuality},
  author = {Cabello, Ad\'an},
  journal = {Phys. Rev. Lett.},
  volume = {101},
  issue = {21},
  pages = {210401},
  numpages = {4},
  year = {2008},
  month = {Nov},
  publisher = {American Physical Society},
  doi = {10.1103/PhysRevLett.101.210401},
  url = {https://link.aps.org/doi/10.1103/PhysRevLett.101.210401}
}

@article{Abramsky2011,
doi = {10.1088/1367-2630/13/11/113036},
url = {https://dx.doi.org/10.1088/1367-2630/13/11/113036},
year = {2011},
month = {nov},
publisher = {IOP Publishing},
volume = {13},
number = {11},
pages = {113036},
author = {Samson Abramsky and Adam Brandenburger},
title = {The sheaf-theoretic structure of non-locality and contextuality},
journal = {New Journal of Physics}
}

@ARTICLE{Helstrom1969,
  title    = "Quantum detection and estimation theory",
  author   = "Helstrom, Carl W",
  journal  = "Journal of Statistical Physics",
  volume   =  1,
  number   =  2,
  pages    = "231--252",
  month    =  jun,
  year     =  1969,
  doi = {10.1007/BF01007479},
  url = {https://link.springer.com/article/10.1007/BF01007479}
}

@article{Schmid2022,
  title = {Uniqueness of Noncontextual Models for Stabilizer Subtheories},
  author = {Schmid, David and Du, Haoxing and Selby, John H. and Pusey, Matthew F.},
  journal = {Phys. Rev. Lett.},
  volume = {129},
  issue = {12},
  pages = {120403},
  numpages = {6},
  year = {2022},
  month = {Sep},
  publisher = {American Physical Society},
  doi = {10.1103/PhysRevLett.129.120403},
  url = {https://link.aps.org/doi/10.1103/PhysRevLett.129.120403}
}

@misc{Shahandeh2021arXiv,
      title={Quantum computational advantage implies contextuality}, 
      author={Farid Shahandeh},
      year={2021},
      eprint={2112.00024},
      archivePrefix={arXiv}
}

@misc{Ambainis2016,
      title={Parity Oblivious d-Level Random Access Codes and Class of Noncontextuality Inequalities}, 
      author={Andris Ambainis and Manik Banik and Anubhav Chaturvedi and Dmitry Kravchenko and Ashutosh Rai},
      year={2016},
      eprint={1607.05490},
      archivePrefix={arXiv}
}

@article{Gupta2023,
  title = {Quantum Contextuality Provides Communication Complexity Advantage},
  author = {Gupta, Shashank and Saha, Debashis and Xu, Zhen-Peng and Cabello, Ad\'an and Majumdar, A. S.},
  journal = {Phys. Rev. Lett.},
  volume = {130},
  issue = {8},
  pages = {080802},
  numpages = {6},
  year = {2023},
  month = {Feb},
  publisher = {American Physical Society},
  doi = {10.1103/PhysRevLett.130.080802},
  url = {https://link.aps.org/doi/10.1103/PhysRevLett.130.080802}
}

@article{Raussendorf2013,
  title = {Contextuality in measurement-based quantum computation},
  author = {Raussendorf, Robert},
  journal = {Phys. Rev. A},
  volume = {88},
  issue = {2},
  pages = {022322},
  numpages = {7},
  year = {2013},
  month = {Aug},
  publisher = {American Physical Society},
  doi = {10.1103/PhysRevA.88.022322},
  url = {https://link.aps.org/doi/10.1103/PhysRevA.88.022322}
}

@article{Bermejo-Vega2017,
  title = {Contextuality as a Resource for Models of Quantum Computation with Qubits},
  author = {Bermejo-Vega, Juan and Delfosse, Nicolas and Browne, Dan E. and Okay, Cihan and Raussendorf, Robert},
  journal = {Phys. Rev. Lett.},
  volume = {119},
  issue = {12},
  pages = {120505},
  numpages = {5},
  year = {2017},
  month = {Sep},
  publisher = {American Physical Society},
  doi = {10.1103/PhysRevLett.119.120505},
  url = {https://link.aps.org/doi/10.1103/PhysRevLett.119.120505}
}

@article{Frembs2018,
doi = {10.1088/1367-2630/aae3ad},
url = {https://dx.doi.org/10.1088/1367-2630/aae3ad},
year = {2018},
month = {oct},
publisher = {IOP Publishing},
volume = {20},
number = {10},
pages = {103011},
author = {Markus Frembs and Sam Roberts and Stephen D Bartlett},
title = {Contextuality as a resource for measurement-based quantum computation beyond qubits},
journal = {New Journal of Physics}
}

@article{Bennet1992,
  title = {Quantum cryptography using any two nonorthogonal states},
  author = {Bennett, Charles H.},
  journal = {Phys. Rev. Lett.},
  volume = {68},
  issue = {21},
  pages = {3121--3124},
  numpages = {0},
  year = {1992},
  month = {May},
  publisher = {American Physical Society},
  doi = {10.1103/PhysRevLett.68.3121},
  url = {https://link.aps.org/doi/10.1103/PhysRevLett.68.3121}
}
\bibliographystyle{apsrev4-1}

\begin{widetext}


\setcounter{equation}{0}
\renewcommand{\theequation}{\thesection.\arabic{equation}}

\newpage
\section{End Matter}

\subsection{Quantum approximate state-dependant cloning}\label{app:qsdc}

In this section, we provide proof of the contextuality of optimal state-dependent cloning using the rank separation approach.

\begin{theorem*}\label{th:SDCL_Context}
    A noncontextual ontological model cannot achieve optimal global fidelity in state-dependent quantum cloning.
\end{theorem*}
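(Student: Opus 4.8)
The plan is to run the rank-separation argument of Theorem~3 essentially verbatim, but in the much smaller subspace that a two-state cloner lives in. Recall from Ref.~\cite{Bruss1998} (and as used in our discussion above) that, for two pure input states $|a_1\rangle,|a_2\rangle$ with $0<|\langle a_1|a_2\rangle|<1$, the two-copy clones produced by the optimal state-dependent cloner lie entirely within the span $W$ of the ideal clones $\{|a_1a_1\rangle,|a_2a_2\rangle\}$, a two-dimensional subspace. Since cloning fidelities depend only on $|\langle a_1|a_2\rangle|$, we may assume, as in Appendix~\ref{app:MEQSD}, that the inputs, hence the ideal and optimal clones, carry real amplitudes, so that $W$ is a \emph{real} two-dimensional space.

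I would then form the COPE $C\s{SDCL}$ whose preparations are the ideal clones $\ketbra{a_1a_1},\ketbra{a_2a_2}$, the optimal clones, the within-$W$ orthogonal complements of each of these, and, if needed, a few further pure states of $W$ together with their within-$W$ complements; the outcomes are the projective measurements associated with this same list (in particular the ideal-clone tests $\ketbra{a_ia_i}$). Because every preparation is a state supported on $W$, each effect enters the probabilities only through its restriction to $W$, an operator on the real two-dimensional space; hence the columns of $C\s{SDCL}$ lie in a space of dimension at most $\dim(\text{real }2\times2\text{ density operators})=3$, i.e.\ $\rank C\s{SDCL}\le 3$. The antidistinguishability hypothesis of Theorem~1 holds because each state is perfectly excluded by the ideal test of its within-$W$ complement, giving, in each row and column, a zero located in a position unique to that state.

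Choosing the list so that it contains $n\ge 6$ distinct states (easy in the nontrivial regime, after discarding coincidences and padding with generic pure states of $W$), we have $\binom{4}{\lfloor4/2\rfloor}=6\le n$, so Theorem~1~(b) forces the IF matrix or the ES matrix of \emph{every} ontological model of $C\s{SDCL}$ to span a subspace of dimension at least $4$; that is, $\rank F\ge 4$ or $\rank E\ge 4$. Since $\rank C\s{SDCL}\le 3<4$, the chain of equalities in Eq.~\eqref{eq:nctxt_rank} must fail, so by the Corollary no noncontextual ontological model reproduces $C\s{SDCL}$, and in particular no noncontextual model attains the optimal global fidelity, which is the assertion of the theorem. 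Paralleling the MEQSD calculation, one can push this further by imposing $\rank C\s{SDCL}\le 3$ directly and solving the resulting determinant condition for the success parameters $\{s_i\}$ in terms of the confusabilities $\{c_{ij}\}$, thereby exhibiting the noncontextual fidelity bound explicitly.

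The counting is routine; the load-bearing step is the structural input that the \emph{optimal} two-copy clones genuinely sit inside the fixed two-dimensional span $W$ of the ideal clones (the two-state analogue of the operator-space bound invoked in Theorem~3), together with the bookkeeping needed to guarantee that at least six distinct, mutually antidistinguishable states can be assembled whenever $0<|\langle a_1|a_2\rangle|<1$. Everything else is a strict specialization of the argument already carried out for phase-covariant and universal cloning.
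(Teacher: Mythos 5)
Your proposal is correct and follows essentially the same route as the paper's Appendix~\ref{app:qsdc}: build the COPE from the ideal clones, the optimal clones, and their orthogonal complements inside the two-dimensional span of the ideal clones, bound its rank by three, and invoke Theorem~1~(b) (with $n\geq 6=\binom{4}{2}$ antidistinguishable states; the paper uses all eight) to force $\rank F$ or $\rank E\geq 4$, contradicting Eq.~\eqref{eq:nctxt_rank}. The only cosmetic difference is your allowance for padding with extra states of the span, which the paper does not need since its eight states already suffice.
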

\begin{proof}
The set of preparations relevant to state-dependent cloning is
$\St=\{\ketbra{aa},\ketbra{(aa)^\perp},\ketbra{bb},\ketbra{(bb)^\perp},\ketbra{\alpha},\\
\ketbra{\alpha^\perp},\ketbra{\beta},\ketbra{\beta^\perp}\}$, where $\ketbra{aa}$ and $\ketbra{bb}$ are the ideal clones, $\ketbra{\alpha}$ and $\ketbra{\beta}$ are the optimal bipartite clones.
It is known that the vectors $\ket{\alpha}$ and $\ket{\beta}$ live in the two-dimensional space spanned by $\ket{aa}$ and $\ket{bb}$~\cite{Bruss1998}.
Hence, without loss of generality, we choose the orthogonal vectors $\ket{(aa)^\perp}$, $\ket{(bb)^\perp}$, $\ket{\alpha^\perp}$, and $\ket{\beta^\perp}$ to lie within the same two-dimensional space.
The effects are also given by the same projectors, $\EE=\St$.
We can now write the $8 \times 8$ COPE $C\s{SDC}$ for the state-dependent cloning.
Given that all vectors are in $\spn\{\ket{aa},\ket{bb}\}$, the rank of $T\s{SDC}$ is at most three.
However, using Theorem~1.~(b), any ontological model generating the same statistics lives in an (at least) five-dimensional vector space because $8 < 10=\binom{5}{3}$, $8 < \binom{5}{\lfloor 5/2\rfloor}=\binom{5}{2}=10$, and 5 is the smallest integer where this inequality is satisfied.
Also, by Theorem~1, one of the spaces of epistemic states or response functions must span a $4$-dimensional ontic subspace because $8>6=\binom{4}{2}$.
Hence, Eq.~(3) cannot hold, and any ontological model for the optimal state-dependent cloning must be contextual.    
\end{proof}
The COPE used in the above theorem can be written out as follows (which is not the column-stochastic version):
    \begin{gather}\label{eq:COPE_SDCLONE}
    C\s{PCQC}=
    \begin{bmatrix}
        1 & 0 & p_f & 1-p_f & \delta & 1-\delta & p_i & 1 - p_i \\
        0 & 1 & 1-p_f & p_f & 1-\delta & \delta & 1 - p_i & p_i \\
        p_f & 1-p_f & 1 & 0 & p_i & 1-p_i & p_s & 1 - p_s \\
        1-p_f & p_f & 0 & 1 & 1-p_i & p_i & 1-p_s & p_s \\
        \delta & 1-\delta & p_i & 1-p_i & 1 & 0 & p_f & 1-p_f\\
        1-\delta & \delta & 1-p_i & p_i & 0 & 1 & 1-p_f & p_f \\
        p_i & 1 - p_i & p_s & 1 - p_s & p_f & 1 - p_f & 1 & 0 \\
        1-p_i & p_i & 1-p_s & p_s & 1-p_f & p_f & 0 & 1 
    \end{bmatrix},
    \end{gather}
where
\begin{equation}
    p_s = |\braket{\alpha}{\beta}|^2, \quad
    p_f = |\braket{\alpha}{aa}|^2, \quad
    p_i = |\braket{\alpha}{bb}|^2, \quad
    \delta = |\braket{aa}{bb}|^2.
\end{equation}
Using Theorem~1.~(b) of the main text, if we do not assume quantum mechanics \textit{a priori} and impose contextuality, it must hold that $\rank{C\s{PCQC}}\leq 3$, assuming the largest rank of the matrix factors satisfies the inequality in Theorem 1b.
Computing the eigenvalues of $C\s{PCQC}$ as functions of $p_s$, $p_f$, $p_i$, and $\delta$, it turns out that for normalized vectors $\ket{\alpha}$ and $\ket{\beta}$ that lie within the span of the ideal clones $\ket{aa}$ and $\ket{bb}$ we obtain the desired result.
This is in agreement with the optimal quantum case~\cite{Bruss1998} in which,
\begin{equation}
    \ket{\alpha} = \mu \ket{aa} + \eta \ket{bb} \quad\quad \ket{\beta} = \eta \ket{aa} + \mu \ket{bb}.
\end{equation}
Note that $\mu$ and $\eta$ ensure that the clones are symmetric, i.e., the fidelities of the clones are the same for both possible choices of states. 
If we change the variables $\mu = c + d$ and $\eta = c - d$, the normalization condition of the clone states will be given by
\begin{equation}
    (c+d)^2 + (c-d)^2 + 2(c^2 - d^2) \cos{\phi}^2 = 1,
\end{equation}
where $\braket{a}{b} = \cos{\phi}$.

Finding solutions that reduce the rank of the COPE is generally challenging due to the nonlinearity of the problem.
Nevertheless, let us fix $\braket{a}{b}=1/\sqrt{2}$ ($\phi = \pi/4$) and look for values of $c$ and $d$ that reduce the rank.
The set of solutions includes
\begin{equation}
    c = \sqrt{\frac{1}{6}(1 + \frac{1}{\sqrt{2}})} \quad \quad d = \sqrt{\frac{1}{6}(1 - \frac{1}{\sqrt{2}})}
\end{equation}
resulting in the set of eigenvalues $\{4, 5/2, 3/2, 0, 0, 0, 0, 0\}$. 
These values precisely correspond to the form of cloned states that saturate the maximum fidelity for state-dependent cloning of states with $1/\sqrt{2}$ overlap~\cite{Bruss1998}.
It thus follows that the optimal quantum clones are necessarily contextual.
However, there exist a large family of solutions that guarantee contextuality and yet do not correspond to optimal cloning strategies.

\end{widetext}

\end{document}